\numberwithin{equation}{section}
\numberwithin{figure}{section}
\theoremstyle{plain}
\newtheorem{thm}{\protect\theoremname}
  \theoremstyle{definition}
  \newtheorem{defn}[thm]{\protect\definitionname}
  \theoremstyle{plain}
  \newtheorem{prop}[thm]{\protect\propositionname}
  \theoremstyle{plain}
  \theoremstyle{remark}
  \newtheorem{rem}[thm]{\protect\remarkname}
  \theoremstyle{definition}
  \newtheorem{example}[thm]{\protect\examplename}
\def\R{{\mathbb R}}
\def\V{{\mathbb R}^d}
\def\pol{{\bf Pol}}
\def\id{{\bf id}}
\def\Diff{{\bf Diff}}
\def\DiffC{\Diff_b}
\def\bR{{\mathbb R}}
\def\bC{{\mathbb C}}
\def\Op{\mathrm{Op}}
\def\op{\mathrm{Op}_1}
\def\m{\frac{d\overline{x}\:d\overline{\xi}}{(2\pi\hbar)^d}}
  \providecommand{\definitionname}{Definition}
  \providecommand{\examplename}{Example}
  \providecommand{\lemmaname}{Lemma}
  \providecommand{\propositionname}{Proposition}
  \providecommand{\remarkname}{Remark}
\providecommand{\theoremname}{Theorem}
\begin{document}

\selectlanguage{english}

\title{$G$--systems and deformation of $G$--actions on $\bR^{d}$}

\begin{abstract}
Given a (smooth) action $\varphi$ of a Lie group $G$ on $\bR^d$ we construct a DGA whose Maurer--Cartan elements are in one--to--one correspondence with some class of defomations of the (induced) $G$--action on $C^{\infty}(\bR^d)[[\hbar]]$.
In the final part of this note we discuss the cohomological obstructions to the existence and to the uniqueness (in a sense to be clarified) of such deformations.
\end{abstract}

\author{Benoit Dherin and Igor Mencattini}

\address{Benoit Dherin, Department of Statistics, University of California,
Berkeley, CA 94720-3840, USA}

\email{dherin@math.berkeley.edu}

\address{Igor Mencattini, ICMC-USP Universidade de Sao Paulo,
Avenida Trabalhador Sao-carlense 400 Centro, CEP: 13566-590, Sao Carlos,
SP, Brazil}

\email{igorre@icmc.usp.br}

\maketitle

\section{Introduction\label{sec:Preliminaries}}

In this note we introduce an algebra of formal differential operators which will be obtained from a 
a class of Fourier Integral Operators (FIO from now on) in the limit $\hbar\rightarrow 0$.
Given a (smooth) $G$--action on $\bR^d$ we will use these formal operators to define a family of {\it deformations} (in a sense will be explained below) of the induced $G$--action on the space $C^{\infty}(\bR^d)[[\hbar]]$.
These families of deformations are controlled by the Maurer--Cartan elements of a Differential Graded Algebra (DGA since now on) of {\it formal amplitudes}. Such a DGA will be the main ingredient for the analysis of the cohomological obstruction to the existence and to the uniqueness (rigidity) of such deformations.
These families of deformations and the corresponding DGA play an important role in the theory of the {\it quantum momentum maps} as it is explained in the paper \cite{D-M}.

\section{A class of FIOs, their asymptotic expansions and the formal operators\label{sub:Asymptotic-expansion}}

Let $\mathscr{S}({\bR}^d)$ be the space of Schwartz functions on $\bR^d$, see for example \cite{T}. In this paper we will be concerned with the following class of FIO:\

\begin{equation}
\Op(a,\varphi)\psi(x)=\int_{\bR^d}\widehat{\psi}(\overline{\xi})a(x,\overline{\xi})e^{\frac{i}{\hbar}\langle\overline{\xi},\varphi^{-1}(x)\rangle}\frac{d\overline{\xi}}{(2\pi\hbar)^{\frac{d}{2}}}\label{f71}
\end{equation}

where $\phi\in\mathscr{S}(\bR^d)$, $\widehat{\psi}$ is its {\it semiclassical Fourier transform}, see \cite{M} or \cite{Z}, 
$a(x,\xi)\in S^0$ and $\varphi\in\DiffC(\bR^d)$, see Remark \ref{rem:0} below.

\begin{rem}\label{rem:0}
$S^{0}$ is the space of $C^{\infty}$--functions (defined on $\bR^d$) whose elements depend on a parameter $\hbar \in(0,\hbar_{0}]$ and which satisfy the following differential inequality
$$
\vert\partial^{\beta}_x\partial^{\alpha}_{\xi}a(x,\xi)\vert\leq A_{\alpha,\beta}(1+\vert\xi\vert)^{-\vert\alpha\vert}
$$
for all multi--indices $\alpha,\beta\in{\mathbb N}^d$, see for example \cite{Stein}. In what follows the $\hbar$--depedence will be not, in general, explicitely written. Finally, $\varphi\in\DiffC(\bR^d)$
is the group of {\it bounded} diffeomorphisms of $\bR^d$, i.e. the subgroup of $\Diff({\bR^d})$ whose elements $\varphi$ are such that $\sup_{x\in\R^{d}}|\partial_{x}^{\beta}\varphi(x)|<\infty$
for all multi-indices $\beta\in\mathbb{N}^{d}\backslash(0,\dots,0)$.
\end{rem}

\begin{rem}
Even if in this letter we will mainly interested in the formal properties of the operators defined in formula (\ref{f71}), we remind that they are a class of continuous operators on $\mathscr{S}(\bR^d)$, which is closed with respect to composition: this means that given $\Op(a,\varphi), \Op(b,\varphi_2)$ as above, their composition $\Op(a,\varphi)\circ\Op(b,\varphi_2)$ can be represented as $\Op(c,\varphi)$ for suitable $c\in S^0$, and $\DiffC(\bR^d)$.\\

In the following Example we introduce a simple, though interesting, family of such operators.
\end{rem}

\begin{example}[$\xi$--independent symbols]\label{ex:ind} 
In what follows, with some abuse of notation we will write the FIO of formula (\ref{f71}) as 
$$
\Op(a,\varphi)\psi(x)=\int_{\bR^d\times\bR^d} {\psi}(\overline{x})a(x,\overline{\xi})e^{\frac{i}{\hbar}\langle\overline{\xi},\varphi^{-1}(x)-\overline{x}\rangle}\m
$$

If $a=a(x)$ then
$$\Op(a,\varphi)\psi(x)=\int_{\bR^d\times\bR^d}\psi(\overline{x})a(x,\overline{\xi})e^{\frac{i}{\hbar}\langle\overline{\xi},\varphi^{-1}(x)-\overline{x}\rangle}\frac{d\overline{x}d\overline{\xi}}{(2\pi\hbar)^{d}}=a(x)\psi(\varphi^{-1}(x))$$
Computing $\Op(a,\varphi_1)\circ\Op(b,\varphi_2)\psi(x)$ we arrive to the following formula
$$
\Op(a,\varphi_1)\circ\Op(b,\varphi_2)=\Op\big(a(x)b(\varphi^{-1}(x)),\varphi_1\circ\varphi_2)
$$
\end{example}

We now work out the asymptotic expansion of the bounded operators
(\ref{f71}) in the limit $\hbar\rightarrow 0$. First, we fix the
dependence in $\hbar$ for the amplitude $a$ as follows:
\begin{equation}
a(x,\xi)=a^{0}(x,\xi)+a^{1}(x,\xi)\hbar+a^{2}(x,\xi)\hbar^{2}+\cdots,\label{eq:asymptotic}
\end{equation}
where the $a^{n}\in S^{0}$ do not depend on $\hbar$ for all
$n$. Namely, the Borel summation lemma (see \cite[Prop. 2.3.2, p. 14]{M}
for instance) guarantees then that there exists an amplitude in $S^{0}$
depending on $\hbar$ whose asymptotic expansion in $\hbar$ yields
back (\ref{eq:asymptotic}). 

Now, changing the variable $\tilde{\xi}=\xi/\hbar$ and letting $\hbar\rightarrow 0$
(which allows us to compute a Taylor series of the amplitude at
$(x,0)$), we obtain that:
\begin{eqnarray*}
\Op(a(x,\xi),\varphi)\psi(x) & = & \int\psi(\overline{x})a(x,\hbar\tilde{\xi})e^{i\langle\tilde{\xi},\varphi^{-1}(x)-\overline{x}\rangle}\frac{d\overline{x}d\tilde{\xi}}{(2\pi)^{d}},\\
 & = & \sum_{n\geq0}\hbar^{n}\op(P^{n},\varphi)\psi(x).
\end{eqnarray*}
where $\op$ is the same integral operator as $\Op$ except with the
parameter $\hbar$ in the phase set to $1$, and where
\[
P^{n}(x,\xi)=\sum_{\vert\alpha\vert\leq n}f_{\alpha}(x)\xi^{\alpha},
\]
are polynomial in $\xi$ of order $n$ with coefficients in $S_d(1)$, the space of $C^{\infty}$--functions on $\bR^d$, which are bounded with all their derivative bounded.
(actually, $f_{\alpha}(x)=\frac{1}{|\alpha|!}\partial_{\xi}^{\alpha}a_{n-|\alpha|}(x,0)$). 

Since, for a polynomial $P^{n}(x,\xi)$ in $\xi$ as above, the corresponding
operator 
\[
\op(P^{n},\varphi)\psi(x)=\sum_{|\alpha|\leq n}f_{\alpha}(x)(D_{x}^{\alpha}\psi)(\varphi^{-1}(x))=\left(P^{n}\left(x,D\right)\psi\right)(\varphi^{-1}(x))
\]
is a differential operator of order $n$ (composed with the pullback),
we $\mathcal{}$obtain for $\Op(a,\varphi)$ an asymptotic expansion
in terms of infinite order differential operators of the form: 
\begin{equation}
\Op(a,\varphi)\psi(x)=P^{0}(x)\psi(\varphi^{-1}(x))+\sum_{n\geq1}\hbar^{n}\left(P^{n}\left(x,D\right)\psi\right)(\varphi^{-1}(x)).\label{eq: formal operators}
\end{equation}

\begin{rem}
This derivation for the asymptotic (\ref{eq: formal operators}) is
a shortcut for the usual stationary phase expansion. One recovers
(\ref{eq: formal operators}) by using the usual stationary phase
expansion (see \cite{Z}) for quadratic phase using the following
change of variable $\bar{y}=\varphi^{-1}(x)-\bar{x}$.
\end{rem}

In the following definition, we retain only the formal aspects of
the asymptotics, forgetting that the operators (\ref{f71}) are actually
bounded operators (i.e. the amplitudes are in $S^{0}$ and the
action is in $\DiffC(\bR^d)$). 

\begin{defn}
We define the algebra $\mathcal{D}$ of formal operators of the form
\begin{equation}
\Op_{1}(P,\varphi)\psi(x)=P^{0}(x)\psi(\varphi^{-1}(x))+\sum_{n\geq1}\hbar^{n}\left(P^{n}\left(x,D\right)\psi\right)(\varphi^{-1}(x)),\quad\varphi\in\Diff(\V)\label{eq:formal operators}
\end{equation}
which acts on the formal space of functions $C^{\infty}(\R^{d})[[\hbar]]$,
and where 
\[
P^{n}(x,D)=\sum_{|\alpha|\leq n}f_{\alpha}(x)D^{\alpha},
\]
is a differential operator of order $n$ with coefficients $f_{\alpha}\in C^{\infty}(\R^{d})$.
The corresponding space of symbols $\mathcal{P}$ is the space of
formal functions of the form
\[
P(x,\xi)=P^{0}(x)+\sum_{n\geq1}\hbar^{n}\sum_{|\alpha|\leq n}f_{\alpha}(x)\xi^{\alpha},
\]
with $P^{0}(x),f_{\alpha}\in C^{\infty}(\R^{d})$. 
\end{defn}

\begin{rem}
By a direct computation of $\op(P,\varphi_1)\circ\op(K,\varphi_2)$ applied to a (formal) series $\psi\in C^{\infty}(\bR^d)[[\hbar]]$, we can get a close formula for the (formal) symbol representing such an operator. We will not write the explicit expression for such a formal symbol, but we will denote it as $P\,_{\varphi_{1}}\star_{\varphi_{2}}K$ (to remind its dependence on the two diffeomorphisms $\varphi_1,\varphi_2$). In this way, the composition law in $\mathcal{D}$ will be written as:
\begin{equation}
\op(P,\varphi_{1})\circ\op(K,\varphi_{2})=\op(P\,_{\varphi_{1}}\star_{\varphi_{2}}K,\varphi_{1}\circ\varphi_{2}).\label{eq:formal product}
\end{equation}
\end{rem}

\begin{rem}
Note that formula (\ref{eq:formal product}) is the binary operation of a semigroup structure on $\mathcal{P}\times \Diff(\bR^d)$.
\end{rem}

\section{DGAs of (formal) $G$-amplitudes}\label{sec:The-complex-of}

Given $G$ as above we fix $\varphi:G\times\bR^d\rightarrow\bR^d$ to be a smooth (left) action. Using $\varphi$ we define a Differential Graded Algebra (or DGA for short), whose (homogeneous) elements of degree $k$ are, roughly speaking, formal amplitudes depending on $k$ group variables, and whose (graded) product corresponds
to the composition of the formal operators of the type (\ref{eq:formal operators}). 
More precisely

\begin{defn}
The complex of formal $G$--amplitudes $(\mathcal{P}_{\varphi}^{\bullet},d)$ is the space of $k$-cochains
\[
{\mathcal{P}_{\varphi}}^{k}=\{a:G\times\dots\times G\rightarrow\mathcal{P}\},\:\: \mathcal{P}^0_{\varphi}=\mathcal{P}
\]
endowed with the differential $d:\mathcal{P}_{\varphi}^{k}\rightarrow\mathcal{P}_{\varphi}^{k+1}$ defined by:
\begin{equation}
(da)(g_{1},\dots,g_{k})=\sum_{i=1}^{k}(-1)^{i}a(g_{1},\dots,g_{i}g_{i+1},\dots,g_{k+1}),\label{f12}
\end{equation}
which we extend by $\bC$-linearity to ${\mathcal{P}_{\varphi}}^{\bullet}=\oplus_{k\geq0}{\mathcal{P}_{\varphi}}^{k}$.
The complex is normalized by the condition $a_{e,\dots,e}=1$ with $e\in G$ being the group unit.
\end{defn}
For all $k,l$ we define
$$\star:\mathcal{P}_{\varphi}^{k}\times\mathcal{P}_{\varphi}^{l}\rightarrow\mathcal{P}_{\varphi}^{k+l}$$
and we then extend it by $\bC$--linearity to $\mathcal{P}_{\varphi}^{\bullet}$. We remind now that:

\begin{defn} If $(A,\star,d)$ is a DGA then:
\begin{enumerate}
 \item The solutions of the \textbf{Maurer-Cartan
equation} $da+a\star a=0$ are called \textbf{Maurer--Cartan elements}. The set of all Maurer--Cartan
elements of $A$, which is a subset of $A^{1}$, will be denoted by $\mathrm{MC}(A)$. 
\item Two Maurer--Cartan elements $a,b\in\mathrm{MC}(A)$ are called \textbf{gauge
equivalent} if there exists an invertible $u\in A^{0}$ such that
$au-ua=du$. 
\end{enumerate}
\end{defn}

\begin{thm} Given $(\mathcal{P}^{\bullet}_{\varphi},d,\star)$ as above, then:
\begin{enumerate}
 \item $(\mathcal{P}_{\varphi}^{\bullet},d,\star)$ is a DGA.
 \item The Maurer--Cartan
elements in $\mathcal{P}_{\varphi}^{\bullet}$ are in one--to--one correspondence with the representations
of $G$ of the form $g\rightsquigarrow\Op(a,\varphi_g)$. Moreover, gauge
equivalent Maurer--Cartan elements yields equivalent representations.
\end{enumerate}
\end{thm}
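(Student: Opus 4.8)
The plan is to reduce both parts to the single structural fact recorded in (\ref{eq:formal product}), namely that composition of the formal operators (\ref{eq:formal operators}) is associative and is encoded on symbols by the twisted products ${}_{\varphi_{1}}\star_{\varphi_{2}}$, together with the hypothesis that $\varphi$ is a genuine left action, so that $\varphi_{gh}=\varphi_{g}\circ\varphi_{h}$ and $\varphi_{e}=\id$. I shall use throughout that the graded product on $\mathcal{P}^{\bullet}_{\varphi}$ is the cup product
\[
(a\star b)(g_{1},\dots,g_{k+l})=a(g_{1},\dots,g_{k})\,_{\varphi_{g_{1}\cdots g_{k}}}\star_{\varphi_{g_{k+1}\cdots g_{k+l}}}b(g_{k+1},\dots,g_{k+l}),
\]
for $a\in\mathcal{P}^{k}_{\varphi}$ and $b\in\mathcal{P}^{l}_{\varphi}$; equivalently, one attaches to a tuple $(g_{1},\dots,g_{k})$ the operator $\op(a(g_{1},\dots,g_{k}),\varphi_{g_{1}\cdots g_{k}})$ and reads $\star$ off as operator composition through (\ref{eq:formal product}).

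For assertion (1), associativity of $\star$ is inherited from that of operator composition: in the operator description both $(a\star b)\star c$ and $a\star(b\star c)$ equal the threefold composite $\op(a(\cdots),\varphi_{g_{1}\cdots g_{k}})\circ\op(b(\cdots),\cdots)\circ\op(c(\cdots),\cdots)$, the diffeomorphism labels matching because $\varphi_{g_{1}\cdots g_{k}}\circ\varphi_{g_{k+1}\cdots g_{k+l}}=\varphi_{g_{1}\cdots g_{k+l}}$. The identity $d^{2}=0$ is a telescoping computation: $d^{2}a$ is a signed sum of terms $a(\dots,g_{i}g_{i+1},\dots,g_{j}g_{j+1},\dots)$ which cancel in pairs, the only delicate case being adjacent indices, where $a(\dots,(g_{i}g_{i+1})g_{i+2},\dots)$ and $a(\dots,g_{i}(g_{i+1}g_{i+2}),\dots)$ cancel by associativity in $G$. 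The graded Leibniz rule $d(a\star b)=(da)\star b+(-1)^{|a|}a\star(db)$ is the substantive point; I would split the alternating sum defining $d(a\star b)$ into the faces $1\le i\le k-1$ interior to the $a$-block, the straddling face $i=k$, and the faces $k+1\le i\le k+l$ interior to the $b$-block. By bilinearity of each twisted product ${}_{\varphi_{1}}\star_{\varphi_{2}}$ the first two groups reassemble, with the correct labels $\varphi_{g_{1}\cdots g_{k+1}}$, into $(da)\star b$, while the last group reindexes to $(-1)^{|a|}a\star(db)$.

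For assertion (2), I unwind the Maurer--Cartan equation in degree one. For $a\in\mathcal{P}^{1}_{\varphi}$ the differential gives $(da)(g_{1},g_{2})=-a(g_{1}g_{2})$ while the cup product gives $(a\star a)(g_{1},g_{2})=a(g_{1})\,_{\varphi_{g_{1}}}\star_{\varphi_{g_{2}}}a(g_{2})$, so $da+a\star a=0$ is precisely $a(g_{1})\,_{\varphi_{g_{1}}}\star_{\varphi_{g_{2}}}a(g_{2})=a(g_{1}g_{2})$, which by (\ref{eq:formal product}) reads $\op(a(g_{1}),\varphi_{g_{1}})\circ\op(a(g_{2}),\varphi_{g_{2}})=\op(a(g_{1}g_{2}),\varphi_{g_{1}g_{2}})$; the normalization $a(e)=1$ with $\varphi_{e}=\id$ yields $\op(a(e),\varphi_{e})=\mathrm{id}$. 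Thus $a\mapsto(g\rightsquigarrow\op(a(g),\varphi_{g}))$ is the desired bijection onto representations of the prescribed form, its inverse reading the amplitudes $a(g)$ off the operators. Finally, the empty sum defining $d$ forces $du=0$ for every $u\in\mathcal{P}^{0}_{\varphi}=\mathcal{P}$, so the gauge relation between Maurer--Cartan elements $a,b$ reduces to $u\star a=b\star u$ with $u$ invertible; evaluating at $g$ and using (\ref{eq:formal product}) turns this into $\op(u,\id)\circ\op(a(g),\varphi_{g})=\op(b(g),\varphi_{g})\circ\op(u,\id)$, exhibiting $\op(u,\id)$ as an invertible intertwiner and hence the two representations as equivalent.

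I expect the graded Leibniz identity to be the main obstacle --- not conceptually, but because of the bookkeeping of the diffeomorphism labels $\varphi_{g_{1}\cdots g_{k}}$ through the face maps: one must check that the straddling face $i=k$ glues onto the $a$-block to produce the label $\varphi_{g_{1}\cdots g_{k+1}}$ and leaves no uncancelled boundary contribution. It is precisely the action identity $\varphi_{gh}=\varphi_{g}\circ\varphi_{h}$ that makes the labels on the two sides agree, and I would isolate it before starting. Everything is purely formal in the $\hbar$-adic topology of $C^{\infty}(\R^{d})[[\hbar]]$, so none of the analytic estimates underlying the boundedness of (\ref{f71}) enter the argument.
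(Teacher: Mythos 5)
Your proposal is correct and follows essentially the same route as the paper's own proof: associativity of $\star$ inherited from composition of the formal operators via (\ref{eq:formal product}), the graded Leibniz rule by splitting the alternating sum into the $a$-block, straddling, and $b$-block faces, the degree-one Maurer--Cartan equation unwinding to the multiplicativity $\op(a_{g_1},\varphi_{g_1})\circ\op(a_{g_2},\varphi_{g_2})=\op(a_{g_1g_2},\varphi_{g_1g_2})$, and gauge equivalence (with $du=0$ automatic in degree zero) producing the invertible intertwiner $\op(u,\id)$. If anything, your treatment of gauge equivalence is slightly cleaner, since you state the relation symmetrically in the two Maurer--Cartan elements $a,b$, where the paper's formula $au-ua=du$ involves only $a$.
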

\begin{proof}
Let $a\in\mathcal{P}^k_{\varphi}$, $b\in\mathcal{P}^l_{\varphi}$. 
and let $\text{\ensuremath{a_{g_{1},\dots,g_{k}}{\star}}\;\ \ensuremath{a_{g_{k+1},\dots,g_{k+l}}}}$ be a short hand notation to denote the product of amplitudes introduced in formula (\ref{eq:formal product}), i.e.
$$
\text{\ensuremath{a_{g_{1},\dots,g_{k}}{\star}}\;\ \ensuremath{b_{g_{k+1},\dots,g_{k+l}}}}=\Big(a(g_{1},\dots,g_{k})\Big)\;_{\varphi_{g_{1}\dots g_{k}}}\star_{\varphi_{g_{k+1}\dots g_{k+l}}}\;\Big(b(g_{k+1},\dots,g_{k+l})\Big)
$$
Then part (1) will follows once we will prove that (a) $d$ squares to zero, (b) $\star$ is associative and finally that
(c) $d$ is a derivation of $\star$. The part (a) is a standard computation, while part (b) follows from the associativity of the composition of formal FIOs corresponding to the symbols $a_{g_{1},\dots,g_{k}}$ for all $k\geq 0$, i.e.
i.e.
\begin{equation}
 \Op(a_{g_{1},\dots,g_{k}},\varphi_{g_{1}\dots g_{k}})\circ\Op(b_{g_{k+1},\dots,g_{k+l}},\varphi_{g_{k+1}\dots g_{k+l}})
 = \Op(a_{g_{1},\dots,g_{k}}{\star}\; b_{g_{k+1},\dots,g_{k+l}},\varphi_{g_{1}\dots g_{k+l}})\label{eq:operator_composition}
\end{equation}
For part (c) we compute:
\begin{eqnarray*}
\delta(a\star b)_{g_{1},\dots,g_{k+l+1}} & = & \sum_{i=1}^{k+l}(-1)^{i}(a\star b)_{g_{1},\dots,g_{i}g_{i+1},\dots,g_{k+l+1}},\\
 & = & \left(\sum_{i=1}^{k}(-i)^{i}a_{g_{1},\dots,g_{i}g_{i+1},\dots g_{k}}\right)\;_{\varphi_{g_{1}\dots g_{k}}}\star_{\varphi_{g_{k+1}\dots g_{k+l}}}b_{g_{k+1},\dots,g_{k+l+1}}\\
 & +(-1)^{k} & a_{g_{1},\dots,g_{k}}\;_{\varphi_{g_{1}\dots g_{k}}}\star_{\varphi_{g_{k+1}\dots g_{k+l}}}\left(\sum_{i=1}^{l}(-1)^{i}b_{g_{k},\dots,g_{k+i-1}g_{k+i},\dots,g_{k+l+1}}\right)\\
 & = & ((\delta a)\star b)_{g_{1},\dots,g_{k+l+1}}+(-1)^{k}(a\star(\delta b))_{g_{1},\dots,g_{k+l+1}}\nonumber
\end{eqnarray*}
Let us now prove (2) of the Theorem. Let $a\in\mathcal{P}^1_{\varphi}$. Then $g\rightsquigarrow\Op(a_g,\varphi_g)$ is a representation if and only if $\Op(a_{g_1g_2},\varphi_{g_1g_2})=\Op(a_{g_1},\varphi_{g_1})\circ\Op(a_{g_2},\varphi_{g_2})$
for all $g_1,g_2\in G$. But the right hand side of this equality is $\Op(a_{g_1}\star a_{g_2},\varphi_{g_1}\varphi_{g_2})$
This means that $g\rightsquigarrow\Op(a_g,\varphi_g)$ is a representation if and only if:
$$
\Op(a_{g_1g_2}-da_{g_1,g_2},\varphi_{g_1g_2})=0
$$
which is equivalent to say that $a\in\mathcal{P}^1_{\varphi}$ is a Maurer--Cartan element of $(\mathcal{P}^{\bullet}_{\varphi},d,\star)$. Finally
let us check now that two gauge equivalent Maurer--Cartan elements
induce equivalent representations. First, we note that in $\mathcal{P}_{\varphi}^{\bullet}$,
all elements of degree zero are cocycles. This means that $a,b\in\mathrm{MC}(\mathcal{P}_{\varphi}^{\bullet})$
are gauge equivalent if there is an invertible $u\in\mathcal{P}_{\varphi}^{0}$
such that $au=ua$. Since $u$ is of degree zero, neither $u$ nor
$\Op(u)$ depend on group variables. The formula $au=ua$ 
implies that the corresponding operators satisfy
\[
\Op(a_g,\varphi_g)\circ\Op(u)=\Op(u)\circ\Op(a_g,\varphi_g),\quad g\in G
\]
That is, $\Op(u)$ intertwines the two representations;
since $\Op(u)$ is invertible, because $u$ is invertible, the representations
$g\rightsquigarrow\Op(a_g,\varphi_g)$ and $g\rightsquigarrow\Op(b_g,\varphi_g)$ are equivalent. 
\end{proof}

Because of the importance of the Maurer--Cartan elements of the complex of $G$--amplitudes, we will give to them the following name:

\begin{defn}[$G$--system]\label{def:gs}
 A $G$--system of the complex of formal amplitudes $(\mathcal{P}^{\bullet}_{\varphi},d,\star)$ is an element of
 $\mathrm{MC}(\mathcal{P}^{\bullet}_{\varphi})$.
\end{defn}

\begin{prop}
\label{prop:first_and_second_terms} Let $a=P^{0}(x)+\hbar P^{1}+\cdots\in\mathcal{P}_{\varphi}^{1}$
be a $G$--system in $\mathcal{P}_{\varphi}^{\bullet}$.
\begin{enumerate}
 \item Then $P^{0}$ is a Maurer--Cartan element in $\mathcal{P}_{\varphi}^{\bullet}$.
 \item It defines a new differential on $\mathcal{P}_{\varphi}^{\bullet}$
as follows:
\begin{equation}
d_{P^{0}}a=da+[P^{0},a]=da+P^{0}\star a-(-1)^{|a|}a\star P^{0}.\label{eq:d_S}
\end{equation}
 \item Moreover, $P^{1}$ is a cocycle with respect to this new differential,
and we get the following recursive equations for the higher order
terms
\begin{equation}
d_{P^{0}}P^{n}=-\sum_{\underset{i,j\geq1}{i+j=n}}P^{i}\star P^{j}.\label{eq:recursive_MC}
\end{equation}
\end{enumerate}
\end{prop}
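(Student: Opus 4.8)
The plan is to expand the Maurer--Cartan equation $da+a\star a=0$ in powers of $\hbar$ and to read off its homogeneous components; all three assertions then follow from these components together with standard homological algebra for differential graded algebras.

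First I would isolate the one structural input that makes the argument work: both $d$ and $\star$ are homogeneous of $\hbar$-degree $0$. For $d$ this is immediate from (\ref{f12}), since $d$ merely reshuffles group arguments and is $\bC$-linear. For $\star$ it follows because the product (\ref{eq:formal product}) is induced by composition of the formal operators $\op(\cdot,\varphi)$, which are $\bC[[\hbar]]$-linear endomorphisms of $C^{\infty}(\R^d)[[\hbar]]$ --- the parameter $\hbar$ enters only as a formal scalar, the phase in $\op=\Op_1$ being normalised to $1$ --- so that composition, and hence $\star$, is $\bC[[\hbar]]$-bilinear. Writing $a=\sum_{n\geq0}\hbar^{n}P^{n}$ with each $P^{n}\in\mathcal{P}_{\varphi}^{1}$, bilinearity gives $a\star a=\sum_{N\geq0}\hbar^{N}\sum_{i+j=N}P^{i}\star P^{j}$, so that the Maurer--Cartan equation decouples into the family of identities
\begin{equation}
dP^{N}+\sum_{\underset{i,j\geq0}{i+j=N}}P^{i}\star P^{j}=0,\qquad N\geq0.\label{eq:homogeneous_MC}
\end{equation}

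The three parts then fall out as follows. Part (1) is exactly (\ref{eq:homogeneous_MC}) for $N=0$, namely $dP^{0}+P^{0}\star P^{0}=0$, i.e. $P^{0}\in\mathrm{MC}(\mathcal{P}_{\varphi}^{\bullet})$. For part (2) I would invoke the general principle that twisting the differential of a DGA by a Maurer--Cartan element yields a new differential. Concretely, setting $m=P^{0}$ and using that $(\mathcal{P}_{\varphi}^{\bullet},d,\star)$ is a DGA (part (1) of the Theorem), a direct computation with the graded Leibniz rule $d[m,x]=[dm,x]-[m,dx]$ and the graded Jacobi identity for the commutator $[\cdot,\cdot]$ attached to $\star$ gives, for the operator (\ref{eq:d_S}), the identity $d_{P^{0}}^{2}x=[\,dm+m\star m,\,x\,]$; this vanishes precisely because $m=P^{0}$ solves the Maurer--Cartan equation, so $d_{P^{0}}^{2}=0$. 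For part (3) I would return to (\ref{eq:homogeneous_MC}) for $N\geq1$ and split off the two extreme terms $i=0$ and $j=0$. Since $|P^{N}|=1$, these combine as $P^{0}\star P^{N}+P^{N}\star P^{0}=[P^{0},P^{N}]$, whence (\ref{eq:homogeneous_MC}) rearranges to $d_{P^{0}}P^{N}=dP^{N}+[P^{0},P^{N}]=-\sum_{i+j=N,\,i,j\geq1}P^{i}\star P^{j}$, which is (\ref{eq:recursive_MC}); for $N=1$ the right-hand side is an empty sum, so $d_{P^{0}}P^{1}=0$ and $P^{1}$ is a $d_{P^{0}}$-cocycle.

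The only genuinely delicate point is the $\hbar$-homogeneity of $\star$ used to pass from $da+a\star a=0$ to the decoupled system (\ref{eq:homogeneous_MC}); everything afterwards is formal manipulation inside the DGA. I expect the main obstacle to be pinning this homogeneity down cleanly from the definition of $\star$ via (\ref{eq:formal product}), that is, confirming that composing the differential-operators-with-pullback $\op(P^{i},\cdot)$ and $\op(P^{j},\cdot)$ produces a symbol whose contribution sits exactly in $\hbar$-degree $i+j$ with no spillover --- a fact ultimately guaranteed by the normalisation $\hbar=1$ in the phase of $\op=\Op_1$ and the $\bC[[\hbar]]$-linearity of these operators.
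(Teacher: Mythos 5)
Your proposal is correct and follows essentially the same route as the paper: expand the Maurer--Cartan equation $da+a\star a=0$ in powers of $\hbar$, read off the order-$0$, order-$1$, and order-$n$ components, and invoke the general fact that twisting $d$ by a Maurer--Cartan element gives a new differential. The only difference is one of detail --- you explicitly justify the $\hbar$-bilinearity of $\star$ and sketch the graded Jacobi computation behind $d_{P^{0}}^{2}=0$, both of which the paper takes for granted.
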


\begin{proof}
The Maurer--Cartan equation at order zero in $\hbar$ reads:
\[
dP^{0}+P^{0}\star P^{0}=0,
\]
which means that $P^{0}$ is itself a Maurer--Cartan element. Now it
is a general fact that a differential $d$ twisted by a Maurer--Cartan
element as in (\ref{eq:d_S}) is again a differential. 

The Maurer--Cartan equation at order $1$ in $\hbar$ reads
\[
dP^{1}+P^{0}\star P^{1}+P^{1}\star P^{0}=0,
\]
which is exactly $d_{P^{0}}P^{1}=0$ because $P^{1}$ is of degree
$1$ (it has only one group variable). At last, we obtain (\ref{eq:recursive_MC})
by looking at the Maurer--Cartan equation at order $n\geq2$. 
\end{proof}

We conclude now this Section going back to the Example \ref{ex:ind}:

\begin{example}
 A $\xi$--independent $G$--system in $(\mathcal{P}^{\bullet}_{\varphi},d,\star)$ satisfies the condition\\
 $a_{g_1g_2}(x)=a_{g_1}(x)a_{g_2}\big(\varphi_{g^{-1}_{1}}(x)\big)$ for all $g_1,g_2\in G$ and $x\in\bR^d$. We observe that this is the {\it multiplicative} analogue of being a 1--cocycle of $G$ with values on $C^{\infty}(\bR^d)$ ($G$ acts on $C^{\infty}(\bR^{d})$ by pullback).
 It is probably worthwhile to observe that it is possible to {\it linearize} the condition in (2) as follows. Let us consider $\xi$--independent element $a\in\mathcal{P}^{1}_{\varphi}$ of the form $g\rightsquigarrow a_g(x)=e^{iS_g(x)}$ where $S(x)$ is smooth function.
 It is a simple computation to show that $a$ is a $G$--system if and only if 
 $$
 S_{g_1g_2}(x)=S_{g_1}+S_{g_2}(\varphi_{g_1^{-1}}(x))
 $$
for all $g_1,g_2\in G$ and $x\in\bR^d$.
In other words, $g\rightsquigarrow e^{iS_g(x)}$ is a $G$--system if and only if $S(x)$ is a 1--cocycle of $G$ with values in $C^{\infty}(\bR^d)$ (still the $G$--action on $C^{\infty}(\bR^d)$ is by pullback). 
Let us show now that if $S-\tilde{S}=\delta K$, where $S$ and $\widetilde{S}$
are 1--cocycle and $K$ is a 0--cochain of $G$ with values in $C^{\infty}(\bR^d)$, 
then the induced representations (of $G$) $g\rightsquigarrow\Op(e^{iS_g(x)},\varphi_g)$ and $g\rightsquigarrow\Op(e^{i\widetilde{S}_g(x)},\varphi_g)$ are equivalent. Consider the multipliction operator
operator $\widehat{K}\psi(x)=e^{iK(x)}\psi(x)$. Then:
\begin{eqnarray*}
(\Op(e^{iS_g(x)},\varphi_g)\circ\widehat{K})\psi(x) & = & e^{i(S_{g}(x)+K(\varphi_{g}^{-1}(x)))}\psi(\varphi_{g}^{-1}(x)),\\
(\widehat{K}\circ \Op(e^{i\widetilde{S}_g(x)},\varphi_g))\psi(x) & = & e^{i(\widetilde{S}_{g}(x)+K(x))}\psi(\varphi_{g}^{-1}(x)).
\end{eqnarray*}
Therefore, the relation $\widetilde{S}_{g}(x)-S_{g}(x)=K(\varphi_{g}^{-1}(x))-K(x)=(\delta K)_{g}(x)$
implies that $\Op(e^{iS_g(x)},\varphi_g)\circ\hat{K}=\widehat{K}\circ\Op(e^{i\widetilde{S}_g(x)},\varphi_g)$. From this we can conclude that $H^{1}(G,C^{\infty}(\bR^d))$
controls the deformations of the $G$--action $\varphi$ of the form $g\rightsquigarrow\Op(e^{iS_g(x)},\varphi_g)$. In particular, if this cohomology group vanishes, all such deformations are equivalent to the trivial one.
\end{example}

\section{Existence and rigidity Theorem\label{sec:Existence-and-rigidity}}

In this Section, we give cohomological conditions for the existence
of formal $G$--systems, that is, Maurer--Cartan elements in $\mathcal{P}_{\varphi}$.
The discussion that follows is based on appendix A of \cite{ACD}.
The main fact is that $\mathcal{P}_{\varphi}^{\bullet}$ is a \textit{complete}
DGA in the sense of \cite{ACD}; complete DGA have neat cohomological
conditions governing the existence and obstruction of Maurer--Cartan
elements.
\begin{defn}
We define $\pol_{d}(n)$ for $n\geq0$ to be the space of polynomials
in $\xi$ of the form
\[
P(x,\xi)=\sum_{|\alpha|\leq n}f_{\alpha}(x)\xi^{\alpha},
\]
where $f_{\alpha}\in C^{\infty}(\R^{d}).$
\end{defn}
First of all, $\mathcal{P}_{\varphi}^{\bullet}$ has a natural filtration
\[
\cdots\subset F^{k+1}\mathcal{P}_{\varphi}^{\bullet}\subset F^{k}\mathcal{P}_{\varphi}^{\bullet}\subset\cdots\subset F^{1}\mathcal{P}_{\varphi}^{\bullet}\subset F^{0}\mathcal{P}_{\varphi}^{\bullet}=\mathcal{P}_{\varphi}^{\bullet},
\]
for which each of the $(F^{k}\mathcal{P}_{\varphi}^{\bullet},d)$
is a subcomplex and such that
\[
\star:F^{k}\mathcal{P}_{\varphi}^{\bullet}\times F^{l}\mathcal{P}_{\varphi}^{\bullet}\rightarrow F^{k+l}\mathcal{P}_{\varphi}^{\bullet}
\]
This filtration is given by 
\[
F^{k}\mathcal{P}_{\varphi}^{\bullet}=\big\{\sum_{n\geq k}h^{n}P^{n}:\: P^{n}\in\pol_{d}(n)\,\big\},\qquad k\geq1.
\]
We have then a tower
\[
\mathcal{P}_{\varphi}^{\bullet}/F^{1}\mathcal{P}_{\varphi}^{\bullet}\leftarrow\mathcal{P}_{\varphi}^{\bullet}/F^{1}\mathcal{P}_{\varphi}^{\bullet}\leftarrow\cdots,
\]
whose inverse limit is exactly $\mathcal{P}_{\varphi}^{\bullet}$.
This makes $\mathcal{P}_{\varphi}^{\bullet}$ a \textbf{complete}
DGA in the sense of the Appendix A of \cite{ACD}. 

\begin{defn}
Define the graded vector space $\pol_{d}^{\bullet}(n)$ to be 
\[
\pol_{d}^{k}(n):=\big\{ P:G^{k}\rightarrow\pol_{d}(n)\big\},\quad n,k\geq0.
\]
\end{defn}

Observe that, as graded vector space, we have that 
\begin{equation}
\mathcal{F}^{n}\mathcal{P}_{\varphi}^{\bullet}/\mathcal{F}^{n+1}\mathcal{P}_{\varphi}^{\bullet}\;\simeq\;\pol_{d}^{\bullet}(n)\label{eq:identification}
\end{equation}
and the following decomposition of the complex of formal $G$--amplitudes:
\[
\mathcal{P}_{\varphi}^{\bullet}=\pol_{d}^{\bullet}(0)\oplus\hbar\pol_{d}^{\bullet}(1)\oplus\hbar^{2}\pol_{d}^{\bullet}(2)\oplus\cdots
\]
Let $P^{0}\in\pol_{d}^{\bullet}(0)$ be a Maurer--Cartan element. Then
the twisted differential $d_{P^{0}}$ defined by formula (\ref{eq:d_S}),
respects this decomposition and $(\pol_{d}^{\bullet}(n),d_{P0})$
is a complex for each $n\geq0$. These complexes will be the main
ingredients in our existence and rigidity results for formal $G$--systems.

From Proposition \ref{prop:first_and_second_terms}, we get that if
\begin{equation}
P^{0}+hP^{1}+h^{2}P^{2}+\cdots\label{eq:MC_element}
\end{equation}
is a Maurer--Cartan element, then $P^{0}$ is a Maurer--Cartan element
in $\mathcal{P}_{\varphi}^{\bullet}$ and $P^{1}$ is a 1--cocyle
in $(\pol_{d}^{\bullet}(1),d_{P^{0}})$. Now if we start with a Maurer--Cartan
element $P^{0}$ and and a 1--cocyle $P^{1}$, in general $P^{0}+hP^{1}$
is not a Maurer--Cartan element in $\mathcal{P}_{\varphi}^{\bullet}$,
and we may wonder whether it is possible to find higher terms to get
a Maurer--Cartan element. 

Another question is whether the representation obtained from (\ref{eq:MC_element})
is equivalent to the one obtained by the first term only, i.e. when
a Maurer-Cartan element is gauge equivalent to its first term. 

\begin{defn}
A Maurer--Cartan element $P^{0}$ in $\mathcal{P}_{\varphi}^{\bullet}$
is called rigid if all Maurer--Cartan elements having as first term
$P^{0}$ are gauge equivalent to this first term. 
\end{defn}

In term of the induced representations, $P^{0}$ being rigid means
that all the representations obtained from Maurer--Cartan elements
of the form (\ref{eq:MC_element}) are equivalent as representations
to the representation
\[
\Op(P^{0},\varphi_{g})\psi(x)=P^{0}(x)\psi(\varphi_{g}^{-1}(x)).
\]

The following theorem gives cohomological conditions answering the
questions mentioned above.

\begin{thm}
\label{thm:ExistenceAndRigidity}Let $P^{0}\in\pol_{d}^{\bullet}(0)$
be a Maurer--Cartan element and $P^{1}\in\pol_{d}^{1}(1)$ a  1--cocycle
(i.e $d_{P^{0}}P^{1}=0$). If 
\[
H^{2}(\pol_{d}^{\bullet}(n),d_{P^{0}})=0,\quad n\geq2,
\]
then there exists a Maurer--Cartan element $\omega$ in $\mathcal{P}_{\varphi}^{\bullet}$
such that 
\[
\omega=P^{0}+\hbar P^{1}+\mathcal{O}(\hbar^{2}).
\]
Moreover if 
\[
H^{1}(\pol_{d}^{\bullet}(n),d_{P^{0}})=0,\quad n\geq1,
\]
the Maurer--Cartan element $P^{0}$ is rigid. 
\end{thm}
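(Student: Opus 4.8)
The plan is to treat Theorem \ref{thm:ExistenceAndRigidity} as order-by-order obstruction theory in the complete DGA $(\mathcal{P}_\varphi^\bullet,d,\star)$, using the decomposition $\mathcal{P}_\varphi^\bullet=\bigoplus_{n\geq 0}\hbar^n\pol_d^\bullet(n)$ together with the two facts recorded above: that $d_{P^0}$ preserves this decomposition, so that each $(\pol_d^\bullet(n),d_{P^0})$ is a complex, and that $\star$ sends $\hbar^i\pol_d^\bullet(i)\times\hbar^j\pol_d^\bullet(j)$ into $\hbar^{i+j}\pol_d^\bullet(i+j)$. In this language a Maurer--Cartan element is a zero of the curvature $F(\omega):=d\omega+\omega\star\omega$, and Proposition \ref{prop:first_and_second_terms} says that the coefficient of $\hbar^n$ in $F(\omega)$ is a degree--$2$ element of $(\pol_d^\bullet(n),d_{P^0})$. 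Existence will be an induction constructing the coefficients $P^n$ one $\hbar$--order at a time, the obstruction to each step lying in $H^2(\pol_d^\bullet(n),d_{P^0})$; rigidity will be the companion induction that strips off the higher coefficients by gauge transformations, the obstruction lying in $H^1(\pol_d^\bullet(n),d_{P^0})$.

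For existence I would argue inductively, seeking $P^n$ for $n\geq 2$ so that the partial sums satisfy $F(\omega)=\mathcal{O}(\hbar^{n+1})$ at each stage; the hypotheses $dP^0+P^0\star P^0=0$ and $d_{P^0}P^1=0$ supply the base of the induction, $F(P^0+\hbar P^1)=\mathcal{O}(\hbar^2)$. Assuming $P^0,\dots,P^{n-1}$ found with $F(\omega)=\mathcal{O}(\hbar^n)$ and writing $F(\omega)=\hbar^n F^n+\mathcal{O}(\hbar^{n+1})$ with $F^n=\sum_{i+j=n,\,i,j\geq 1}P^i\star P^j$, the key point is that $F^n$ is a $d_{P^0}$--cocycle. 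I would deduce this from the Bianchi identity $dF(\omega)=F(\omega)\star\omega-\omega\star F(\omega)$, valid for a degree--$1$ element in any associative DGA by $d^2=0$ and the derivation property; its coefficient of $\hbar^n$, together with $\omega=P^0+\mathcal{O}(\hbar)$, reads $dF^n+P^0\star F^n-F^n\star P^0=0$, which is exactly $d_{P^0}F^n=0$ because $F^n$ has even degree. The vanishing $H^2(\pol_d^\bullet(n),d_{P^0})=0$ then yields $P^n\in\pol_d^1(n)$ with $d_{P^0}P^n=-F^n$, i.e. the recursion (\ref{eq:recursive_MC}), and this makes $F(\omega)=\mathcal{O}(\hbar^{n+1})$. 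Completeness of $\mathcal{P}_\varphi^\bullet$ (its identification with the inverse limit of the tower) guarantees that the series $\omega=P^0+\hbar P^1+\cdots$ so produced is an honest element of $\mathcal{P}_\varphi^\bullet$, hence a Maurer--Cartan element with the prescribed first two terms.

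For rigidity the first observation is that $d$ annihilates $\mathcal{P}_\varphi^0$ (degree--$0$ cochains carry no group variables), so that the gauge action on a Maurer--Cartan element reduces to conjugation $\omega\mapsto u\star\omega\star u^{-1}$ by an invertible $u\in\mathcal{P}_\varphi^0$, which visibly preserves $\mathrm{MC}(\mathcal{P}_\varphi^\bullet)$. I would then clear the higher coefficients one order at a time. Suppose conjugation has already brought a Maurer--Cartan element with first term $P^0$ to the form $\omega=P^0+\hbar^n R^n+\mathcal{O}(\hbar^{n+1})$, the case $n=1$ being the given $\omega=P^0+\hbar P^1+\cdots$. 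Since the intermediate coefficients vanish, the order--$n$ Maurer--Cartan equation collapses to $d_{P^0}R^n=0$, so $R^n$ is a $d_{P^0}$--cocycle in $\pol_d^1(n)$; the vanishing $H^1(\pol_d^\bullet(n),d_{P^0})=0$ furnishes $w\in\pol_d^0(n)$ with $d_{P^0}w=R^n$, and conjugating by $u=1+\hbar^n w$ changes $\omega$ only from order $n$ on, its order--$n$ term changing by $w\star P^0-P^0\star w=-d_{P^0}w=-R^n$ and hence vanishing. Composing these transformations over all $n\geq 1$, which converges in the complete DGA, produces a single gauge equivalence between the given element and $P^0$, so $P^0$ is rigid.

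The step I expect to be the genuine \emph{crux} is showing that the obstruction $F^n$ is $d_{P^0}$--closed; once the Bianchi identity is in hand the rest is bookkeeping. The remaining care is in the bigrading: one must verify that $\star$ and $d$ respect the matching of $\hbar$--order with $\xi$--degree, so that each obstruction really lands in the single complex $(\pol_d^\bullet(n),d_{P^0})$ via the identification (\ref{eq:identification}), and that completeness of the filtration legitimizes both the infinite inductive construction in the existence part and the infinite composition of gauge transformations in the rigidity part.
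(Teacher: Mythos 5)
Your proposal is correct, but it proves the theorem by a different route than the paper, which does not run the induction itself: the paper's proof consists of checking that $\mathcal{P}_{\varphi}^{\bullet}$ is a complete DGA with respect to the filtration $F^{k}\mathcal{P}_{\varphi}^{\bullet}$ and then invoking Propositions A.3 and A.6 of the appendix of \cite{ACD}, using the identification (\ref{eq:identification}) together with the observation that the twisted differential $d_{\gamma}$, $\gamma=P^{0}+\hbar P^{1}$, reduces to $d_{P^{0}}$ on the quotients $\mathcal{F}^{n}\mathcal{P}_{\varphi}^{\bullet}/\mathcal{F}^{n+1}\mathcal{P}_{\varphi}^{\bullet}$. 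Your Bianchi-identity argument showing that the obstruction $F^{n}$ is a degree-$2$ cocycle for $d_{P^{0}}$, and your order-by-order conjugation by $u=1+\hbar^{n}w$ in the rigidity part, are precisely the content of those two cited propositions, so what you have produced is a self-contained unwinding of the paper's black box: the paper's route buys brevity and outsources the convergence bookkeeping to \cite{ACD}, while yours makes the mechanism (where the obstructions live and why they are closed) explicit. One point to tighten: you assume that $\star$ maps $\hbar^{i}\pol_{d}^{\bullet}(i)\times\hbar^{j}\pol_{d}^{\bullet}(j)$ into $\hbar^{i+j}\pol_{d}^{\bullet}(i+j)$, whereas the paper only records the filtered property $\star:F^{k}\mathcal{P}_{\varphi}^{\bullet}\times F^{l}\mathcal{P}_{\varphi}^{\bullet}\rightarrow F^{k+l}\mathcal{P}_{\varphi}^{\bullet}$ (only the differential $d_{P^{0}}$ is stated to respect the direct-sum decomposition). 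The stronger graded statement does hold in this setting, since composing differential operators of orders $i$ and $j$ carrying weights $\hbar^{i}$ and $\hbar^{j}$ yields an operator of order at most $i+j$ with weight exactly $\hbar^{i+j}$; alternatively, you can run your induction modulo $\mathcal{F}^{n+1}\mathcal{P}_{\varphi}^{\bullet}$, so that each obstruction is read in the associated graded via (\ref{eq:identification}) and only the filtered property is needed --- which is exactly how the general results of \cite{ACD} are formulated.
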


\begin{proof}
The proof relies on Proposition A.3 and A.6 of the Appendix A of \cite{ACD}.
Since $\gamma=P^{0}+\hbar P^{1}$ is a Maurer--Cartan element modulo
$\mathcal{F}^{2}\mathcal{P}_{\varphi}^{\bullet}$, Proposition A.3
tells us that there exist a Maurer--Cartan element $\omega=P^{0}+\hbar P^{1}+\mathcal{O}(\hbar^{2})$
provided
\[
H^{2}(\mathcal{F}^{n}\mathcal{P}_{\varphi}^{\bullet}/\mathcal{F}^{n+1}\mathcal{P}_{\varphi}^{\bullet},\, d_{\gamma})=0,\quad n\geq2,
\]
where $d_{\gamma}$ is the operator $d_{\gamma}a=da+[\gamma,a]$,
which becomes a differential on the quotient $\mathcal{F}^{n}\mathcal{P}_{\varphi}^{\bullet}/\mathcal{F}^{n+1}\mathcal{P}_{\varphi}^{\bullet}$.
The first part of the theorem follows from (\ref{eq:identification})
and the fact that $d_{\gamma}$ becomes $d_{P^{0}}$ when passing
to the quotient (because $P^{1}$ has one power of $\hbar$, which
will make this term disappear in the quotient). The rigidity part
of the theorem is a direct application of Proposition A.6 with the
same observations as above.
\end{proof}

\subsection{Trivial action}

We will consider now the case $\varphi_{g}=\id$ for all $g\in G$ (i.e. the group $G$ acts trivially on $\bR^d$)
and we further suppose that the first term of the deformation is also trivial, i.e. we consider 
 $G$-systems of the form
\[
a_{g}=1+\hbar P_{g}^{1}+\hbar^{2}P_{g}^{2}+\cdots
\]
Under these assumptions the representations of $G$ given by $g\rightsquigarrow\Op(a_g,\varphi_g)$ are 
deformations of the trivial representation of $G$ on $C^{\infty}(\R^{d})[[\hbar]]$ and
they are of the form 
\begin{equation}
\Op(a_{g},\varphi_g)\psi(x)=\id+\sum_{n\geq1}\hbar^{n}P_{g}^{n}\left(x,D\right),\label{eq:def_of_trivial_rep}
\end{equation}
where $P_{g}^{n}(x,D)$ is a differential operator of order $n$ with
non--constant bounded coefficients.\\
Let us endow $C^{\infty}(\bR^d)$ with a trivial $G$--bimodule structure and let 
$H^{\bullet}(G,C^{\infty}(\R^{d}))$ be the corresponding cohomology. Then:

\begin{thm}
\label{thm:trivial-action}
If $H^{2}(G,C^{\infty}(\R^{d}))=0$
then there exists representation of $G$ into $C^{\infty}(\bR^d)[[\hbar]]$
of the form (\ref{eq:def_of_trivial_rep}). Moreover, if $H^{1}(G,C^{\infty}(\R^{d}))=0$,
all these representations are equivalent. 
\end{thm}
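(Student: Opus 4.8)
The plan is to deduce Theorem~\ref{thm:trivial-action} from the general existence and rigidity result, Theorem~\ref{thm:ExistenceAndRigidity}, by specializing to $\varphi_{g}=\id$ with leading term $P^{0}=1$, and then identifying the twisted complexes $(\pol_{d}^{\bullet}(n),d_{P^{0}})$ with ordinary group cochain complexes. First I would record the starting data: since the action is trivial and the first term of the deformation is $1$, the leading symbol is the constant $1$-cochain $P^{0}_{g}=1$, and this is a Maurer--Cartan element of $\mathcal{P}_{\varphi}^{\bullet}$ because it corresponds to the trivial representation $g\rightsquigarrow\id$, which is manifestly a homomorphism. Choosing any $1$-cocycle $P^{1}\in\pol_{d}^{1}(1)$ (for instance $P^{1}=0$) as initial datum, Theorem~\ref{thm:ExistenceAndRigidity} will apply as soon as the cohomology groups $H^{\bullet}(\pol_{d}^{\bullet}(n),d_{P^{0}})$ are under control.

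The heart of the argument is the computation of $d_{P^{0}}$ on the associated graded pieces $\pol_{d}^{\bullet}(n)\simeq\mathcal{F}^{n}\mathcal{P}_{\varphi}^{\bullet}/\mathcal{F}^{n+1}\mathcal{P}_{\varphi}^{\bullet}$. For $\varphi=\id$ the star product ${}_{\id}\!\star_{\id}$ is just the composition of the associated differential operators, and since $P^{0}=1$ is the symbol of the identity operator one has $1\star a=a\star 1=a$ exactly. Substituting this into $d_{P^{0}}a=da+P^{0}\star a-(-1)^{|a|}a\star P^{0}$ and using the definition \eqref{f12} of $d$, which supplies only the inner terms $\sum_{i=1}^{k}(-1)^{i}a(\dots,g_{i}g_{i+1},\dots)$, the product with $P^{0}$ supplies precisely the two outer terms $a(g_{2},\dots,g_{k+1})$ and $(-1)^{k+1}a(g_{1},\dots,g_{k})$. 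Hence $d_{P^{0}}$ is exactly the inhomogeneous group coboundary $\delta$ for the coefficient module $\pol_{d}(n)$ equipped with the \emph{trivial} $G$-action, the triviality of the action being responsible for the absence of any pullback in the two outer terms.

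I would then finish by a direct cohomology count. As a trivial $G$-module, $\pol_{d}(n)=\bigoplus_{|\alpha|\le n}C^{\infty}(\R^{d})\,\xi^{\alpha}$ is a finite direct sum of copies of $C^{\infty}(\R^{d})$, so $H^{k}(\pol_{d}^{\bullet}(n),d_{P^{0}})=H^{k}(G,\pol_{d}(n))=\bigoplus_{|\alpha|\le n}H^{k}(G,C^{\infty}(\R^{d}))$. Consequently $H^{2}(G,C^{\infty}(\R^{d}))=0$ forces $H^{2}(\pol_{d}^{\bullet}(n),d_{P^{0}})=0$ for every $n\ge2$, which by Theorem~\ref{thm:ExistenceAndRigidity} yields a Maurer--Cartan element $\omega=1+\mathcal{O}(\hbar)$ and hence a representation of the form \eqref{eq:def_of_trivial_rep}; likewise $H^{1}(G,C^{\infty}(\R^{d}))=0$ forces $H^{1}(\pol_{d}^{\bullet}(n),d_{P^{0}})=0$ for all $n\ge1$, so $P^{0}=1$ is rigid and every such representation is gauge equivalent to the trivial one, whence they are pairwise equivalent.

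I expect the only delicate step to be the identification of $d_{P^{0}}$ with $\delta$: one must check that the star product with the constant symbol $1$ reproduces exactly the two outer terms of the bar differential with the correct signs and with the trivial coefficient action, and that any higher-order corrections to $1\star a$ land in $\mathcal{F}^{n+1}$ and therefore vanish in the associated graded. Everything else is bookkeeping, modulo the algebraic fact (used implicitly) that the decomposition $\pol_{d}(n)=\bigoplus_{|\alpha|\le n}C^{\infty}(\R^{d})\,\xi^{\alpha}$ is one of $G$-modules, which is immediate since the $G$-action is trivial on each summand.
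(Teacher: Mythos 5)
Your proposal is correct and follows essentially the same route as the paper: specialize Theorem~\ref{thm:ExistenceAndRigidity} to $\varphi_{g}=\id$, $P^{0}=1$, observe that the twisted differential $d_{P^{0}}$ reduces (because $1\star a=a\star1=a$ for the trivial action) to the group coboundary with trivial coefficients acting coefficientwise on $P=\sum_{|\alpha|\le n}f_{\alpha}\xi^{\alpha}$, and conclude via the splitting $\pol_{d}^{\bullet}(n)\simeq\bigoplus_{|\alpha|\le n}C^{\bullet}(G,C^{\infty}(\R^{d}))$. The only cosmetic difference is that you verify explicitly that $1\star a=a$ and that $P^{0}=1$ is Maurer--Cartan, points the paper leaves implicit in its remark that $\star$ becomes the standard product when the action is trivial.
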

\begin{proof}
As a graded vector space $\pol_{d}^{\bullet}(n)$ can be identified
with the following direct sum with $n$-terms 
\[
C^{\bullet}(G,C^{\infty}(\R^{d}))\oplus\cdots\oplus C^{\bullet}(G,C^{\infty}(\R^{d})),
\]
since, for a cochain $P=\sum_{|\alpha|\leq n}f_{\alpha}(x)\xi^{\alpha}$,
we have that $f_{\alpha}\in C^{\bullet}(G,C^{\infty}(\R^{d}))$ for
all multi--indices $\alpha$. Using Theorem \ref{thm:ExistenceAndRigidity},
we only need to show that, in the case the action is trivial, $d_{1}$
respects this splitting. Let us compute the differential of $P\in\pol_{d}^{k}(n)$:
\begin{eqnarray*}
(d_{1}P)_{g_{1},\dots,g_{k+1}} & = & (dP)_{g_{1},\dots,g_{k+1}}+1_{g_{1}}\star P_{g_{2},\dots,g_{k+1}}-(-1)^{k}P_{g_{1},\dots,g_{k}}\star1_{g_{k+1}},\\
 & = & P_{g_{2},\dots,g_{k+1}}+\sum_{i=1}(-1)^{i}P_{g_{1},\dots,g_{i}g_{i+1},\dots,g_{k+1}}+(-1)^{k+1}P_{g_{1},\dots,g_{k}},
\end{eqnarray*}
since the product $\star$ is now the standard product (associated
with the standard quantization) because the action is trivial. Since
only the $f_{\alpha}$'s depend on the group variables, we obtain
that
\[
d_{1}P=\sum_{|\alpha|\leq n}(\tilde{\delta}f_{\alpha})\xi^{\alpha},
\]
where $\tilde{\delta}$ is the differential of the group cohomology
of $G$ in $C^{\infty}(\bR^d)$ considered as a trivial bimodule. 
\end{proof}

\begin{rem}
This Theorem should be seen as a simplified version of Theorem \ref{thm:ExistenceAndRigidity}
where existence and rigidity for general deformations of $G$--actions were discussed. Moreover, it should be compared with the analogue result of Pinzcon in \cite{Pinzcon}.
\end{rem}

\section{Conclusions}

This short note is a first of a series of papers whose main goal is the study of a family of deformations of $G$--actions obtained using a particular class of FIOs. 
The main tool used to define such a family of deformations is a DGA of formal amplitudes and its corresponding set of $G$--systems, see Definition \ref{def:gs}.
There are two main reasons to be interested in this family of deformations.
The first can be found in the work of the first author, see \cite{CDW1}, \cite{CDW2} and \cite{CDW3}, where a new approach to quantization of Lagrangian submanifolds is proposed. The theory started in this note should represent the algebraic counterpart of the {\it micro--symplectic} approach to quantization developed in the cited papers.\\
Another motivation (not disjoint from the previous one) comes from the theory of the quantum momentum maps, see for example the paper \cite{Xu}.
As it is shown in \cite{D-M} by the authors of this letter, the $G$--systems can be used to get an explicit formula of the momentum map at the quantum level. In particular it is in this context that the main results of this paper about existence and uniqueness of formal deformations of a $G$--action find a different but quite natural interpretation (i.e. the existence of the corresponding quantum momentum map).

We conclude mentioning, without entering in the details which will be the main concern of future investigations, that to every (smooth) $G$--action $\varphi$ it is also possible to associate a complex of {\it bounded} amplitudes $\mathcal{A}^{\bullet}_{\varphi}$. These amplitudes should be compared to the formal ones which were the main objects of this work. The construction of this new complex is formally the same as the one presented above, with the only difference that the {\it bounded cochains} will take now values in the the space $S_{2d}(1)$, see \cite{M} for the definition of this class of symbols. This choice opens the interesting possibility of describing deformations of the {\it unitary} $G$--actions. In fact we observe that pseudo--differential operators with symbols in $S_{2d}(1)$ extend, thanks to the Calderon-Vaillancourt theorem, see for example \cite{M} (from continuous operator on $\mathscr{S}(\bR^d)$) to unitary operators to $L^{2}(\bR^d)$. 

\subsection*{Acknowledgments}

We thank Ugo Bruzzo, Alberto Cattaneo, Giuseppe Dito,\\ Domenico Fiorenza, Alberto Ibort, Gianni Landi, Marc Rieffel, Mauro Spreafico, Ali Tahzibi, Alan Weinstein and Sergio Zani as well as the hospitality of UC Berkeley and SISSA where part of
this project was conducted. B.D. acknowledges support from FAPESP
grant 2010/15069-8 and 2010/19365-0 and the University of S\~ao Paulo.


\begin{thebibliography}{References}


\bibitem{ACD} C. A. Abad, M. Crainic, and B. Dherin, Tensor products
of representations up to homotopy, \textit{J. of Homotopy and Rel.
Structures} \textbf{6 }, 239\textendash{}-288(2011).

\bibitem{D-M} B. Dherin, I. Mencattini, Quantization of Momentum Maps and $G$--systems, pre--print, arXiv:1212.6489 (Unpublished yet).

\bibitem{CDW1} A. Cattaneo, B. Dherin, A. Weinstein, Symplectic microgeometry I: micromorphisms, \textit{J. Symplectic Geom.} \textbf{8}, 205--223 (2010).

\bibitem{CDW2} A. Cattaneo, B. Dherin, A. Weinstein, Symplectic microgeometry II: generating functions. \textit{Bull. Braz. Math. Soc.} (N.S.)  \textbf{42}, (4), 507--536 (2011).

\bibitem{CDW3} A. Cattaneo, B. Dherin, A. Weinstein, Symplectic microgeometry III: monoids, \textit{J.Symplectic Geom.} \textbf{8}, 319--341 (2013).

\bibitem{M}A. Martinez, \textsl{An Introduction to Semiclassical
and Microlocal Analysis}, Springer (2001). 

\bibitem{Pinzcon}G. Pinczon, Deformations of representations, \textit{Lett.
Math. Phys.} \textbf{1} (1977), 535--544.

\bibitem{Stein} E.M. Stein, \textit{Harmonic analysis: real-variable methods, orthogonality, and oscillatory integrals.}
With the assistance of Timothy S. Murphy. Princeton Mathematical Series, 43. Monographs in Harmonic Analysis III. Princeton University Press, 1993. xiv+695 pp.


\bibitem{T} F. Tr\`eves, \textit{Topological Vector Spaces, Distributions and Kernels}
Dover Publications.
(1995).

\bibitem{Z}M. Zworski, \textit{Semiclassical Analysis}, Graduate
Studies in Mathematics \textbf{138}, AMS (2012).

\bibitem{Xu}P. Xu, Fedosov \textasteriskcentered{}-products and quantum
momentum maps, \textit{Commun. Math. Phys.} \textbf{197}, 167--197 (1998). 
\end{thebibliography}
\end{document}